\documentclass[12pt,onecolumn,twoside]{IEEEtran}

\usepackage[utf8]{inputenc}
\usepackage[T1]{fontenc}
\usepackage{url}              
\usepackage{cite}             

\usepackage[cmex10]{amsmath}  
\usepackage{mleftright}       
\mleftright                   

\usepackage{graphicx}         
\usepackage{booktabs}         

\usepackage{amsmath}
\usepackage{latexsym}
\usepackage{amssymb}

\newcommand{\cC}{{\cal C}}

\newcommand{\cH}{{\cal H}}

\newcommand{\cS}{{\cal S}}

\newcommand{\bfzero}{{\bf 0}}

\newcommand{\abs}[1]{\left|#1\right|}
\newcommand*{\bigO}{\mathcal{O}}

\newcommand{\field}[1]{\mathbb{#1}}

\newcommand{\F}{\field{F}}

\newcommand{\bldc}{{\mathbf{c}}}

\newcommand{\bldx}{{\mathbf{x}}}

\newtheorem{theorem}{Theorem}
\newtheorem{definition}{Definition}
\newtheorem{lemma}{Lemma}
\newtheorem{corollary}{Corollary}
\newtheorem{example}{Example}

\begin{document}

\title{Covering Sequences\\ and Covering-Sequences Codes}

\author{%
  \IEEEauthorblockN{Tuvi Etzion}
}

\maketitle

\begin{abstract}
An $(n,R)_q$-covering sequence is a cyclic sequence, over the finite field $\F_q$,
whose consecutive $n$-tuples form a code of length $n$ and covering radius $R$.
An $(n,m,R)_q$-covering-sequences code is a set of cyclic sequences of length $m$, over $\F_q$, whose consecutive $n$-tuples form a code of length $n$
and covering radius $R$. These codes are the best building blocks for $(n,R)_q$-covering sequences.
We show, for small radii, how cyclic codes and constacyclic codes with small covering radius,
can be used to construct such sequences of short length and such codes
with a relatively small number of sequences and a total number of codewords in the associated covering code.
Sequences with small radius whose length approaches asymptotically to optimality are constructed, especially
for an alphabet of prime power size large enough. With the same construction, interesting codes are also constructed for larger radii.
\end{abstract}

\section{Introduction}
\label{sec:Intro}

An {\bf \emph{$(n,R)_q$-covering code}} $\cC$ is a set of words of length $n$ over a given alphabet $\Sigma_q$, of size $q$, such that each word
of length $n$ over $\Sigma_q$ is within Hamming distance $R$ from at least one codeword in $\cC$.
In other words, for each $x \in \Sigma_q^n$, there exists $c \in \cC$ such that $d(x,c) \leq R$,
where $d(y,z)$, $y,z \in \Sigma_q^n$, denotes the Hamming distance between $y$ and $z$.
The {\bf \emph{covering radius}} of a code $\cC$ of length $n$, over $\Sigma_q$, is the smallest $R$ such that for
each $x \in \Sigma_q^n$, there exists $c \in \cC$ such that $d(x,c) \leq R$.
Covering codes were always of interest, but the interest increased due to the following three seminal papers~\cite{CKMS,CLS86,GrSl85}.
The interest was also increased partially because of the connection of covering codes to data compression
(see for example~\cite{CHLL97,EWZ95}).
An excellent book that covers all aspects of such codes is~\cite{CHLL97}. One of the main goals is to find for
given $n$, $R$, and $q$, the $(n,R)_q$-covering code of the smallest size. Lot of work was done in this direction,
see the excellent book~\cite{CHLL97} and references therein.

An {\bf \emph{$(n,R)_q$-covering sequence}} (an $(n,R)_q$-CS for short) is a cyclic sequence whose consecutive $n$-tuple form an $(n,R)_q$-covering code.
The target is to find for given $n$, $R$, and $q$, the $(n,R)_q$-CS of the shortest length.
These sequences were considered first by Chung and Cooper~\cite{ChCo04}
who called such a structure a de Bruijn covering code.
The reason for the name was that the $n$-tuples of a cyclic sequence are considered,
and this sequence forms a cycle in the de Bruijn graph. Moreover, if $R=0$ then the shortest such sequence
is a de Bruijn sequence~\cite{deB46,Etz24,Fre82}.

The simple lower bound on the size of $\cC$ is
$$
\abs{\cC} \geq \frac{q^n}{V_q (n,R)},
$$
where
$$
V_q(n,R) = \sum_{i=0}^{R} \binom{n}{i} (q-1)^i ~.
$$
This bound is the {\bf \emph{sphere-covering bound}}.
There exists a covering code that approaches this bound up to a factor roughly $eR \log R$~\cite{KSV03}.
The proof method for this bound is probabilistic.
A similar bound for an $(n,R)_q$-CS over a prime power alphabet was presented in~\cite{ChCo04}. This bound was generalized to any alphabet by
Vu~\cite{vu05}. The bound states that for fixed $R$
there exists an $(n,R)_q$-CS, over $\Sigma_q$, whose length is at most $\bigO\left(\frac{q^n}{V_q(n,R)}\log n \right)$.

For small $R$, there are some $(n,R)_q$-covering codes that attain the sphere-covering bound with equality, such as the Hamming codes
of length $n=2^r-1$ and covering radius one. there are other codes whose size
is very close to the upper bound, such as the two Golay codes, the ones for $R=2$ which are perfect
asymptotically~\cite[pp. 172--174]{Etz22},\cite{EtMo05},\cite[Construction 4.24]{Str94},
or other similar codes~\cite{EtGr93,EtMo05}. Similarly, such sparse covering codes were also considered for $R=3$~\cite{EtGr93,EtMo05}.
The main goal of the research on $(n,R)_q$-CSs is to obtain sequences whose length is
as close as possible to the upper bounds obtained for covering codes. For small $n$, lower and upper bounds
on the sizes of $(n,R)_2$-CSs were obtained in~\cite{ChCo04}. Several constructions that yield upper bounds
on the size of such binary sequences for small and large $n$ were obtained in~\cite{CETV24,CETV25}. Using heuristic search, some bounds
for small $n$ and $R$ were found by~\cite{Ros25}. Two of the constructions presented in~\cite{CETV24,CETV25} yield sequences
whose length is within a small constant factor of optimality. These sequences were obtained only for length $n=2^r$ and $n=2^r-1$
and the radius of the code is only $1$.

Finally, the exposition in~\cite{CETV25} suggested a new type of covering codes.
An {\bf \emph{$(n,m,R)_q$-CS code}} (an $(n,m,R)_q$-CSC for short) $\cC$ is a set of cyclic sequences,
which are the codewords, of length $m$
such that each word of length $n$ is within distance $R$ from at least one $n$-tuple of a codeword of $\cC$, i.e., the
consecutive $n$-tuples of all the codewords of $\cC$ form an $(n,R)_q$-covering code.
Any cyclic code $\cC$ of length~$n$ and covering radius $R$ can be used as an $(n,n,R)_q$-CSC $\cC'$,
where from the codewords that have the same cyclic shifts (which will be referred to later as
an {\bf \emph{orbit}}, when only one representative is taken). An $(n,R)_q$-CS of length $\ell$ is an $(n,\ell,R)_q$-CSC
and hence the covering-sequences codes are the link between cyclic covering codes and covering sequences.
This also can be the measure to evaluate the efficiency of a covering-sequences code. A code that yields a shorter covering sequence
is a better code.

There are four goals for this paper. The first one is to have some exposition for $(n,R)_q$-CSs
and $(n,m,R)_q$-CSCs for $q>2$ as~\cite{CETV25} considered only binary sequences.
The second is to find new $(n,R)_q$-CSs that are within a small constant factor from optimality.
The third is to put a strong emphasis on $(n,m,R)_q$-CSCs and to find new $(n,m,R)_q$-CSCs, especially codes for which $m> n$.
We believe that this should be a main focus in future research on covering codes.
where the total length of the codewords is within a constant factor from optimality.
The fourth goal is to look on the structure of orbits associated with the codes generated in the current paper.

The rest of the paper is organized as follows. Section~\ref{sec:cyclic} describes the main construction
of a $(n,R)_q$-CS from an $(n,m,R)_q$-CSC.
Section~\ref{sec:binary} reviews the constructions of the sequences for $n=2^r$ and $n=2^r-1$. It also emphasizes the
connection of these constructions and $(n,m,R)_q$-CSCs. Self-dual sequences play an important and surprising role in
the construction of $n=2^r$. They can present a nearly perfect covering code as a negacyclic code. All of these concepts
will be discussed in this section. Section~\ref{sec:nonbinary} shifts the discussion towards non-binary sequences
over any finite field $\F_q$. The optimal constructions in the binary case are generalized for $\F_q$.
In particular, constacyclic codes take the role of self-dual sequences in one of the constructions.
The sequences obtained in the constructions are analyzed, and it is shown that the lengths of the obtained sequences
are within a small constant factor of optimality. In particular, a factor of $\frac{q}{q-1}$ from optimality is
obtained for sequences over $\F_q$.
Section~\ref{sec:interleaving} is devoted to a construction
of $(n,R)_q$-CSs and $(n,m,R)_q$-CSCs from codes obtained by interleaving. Interleaving has been presented in the past for
constructions of $(n,R)_q$-CSs, but in this section, interleaving is not performed directly on the sequences, but
on the parity-check matrices of codes used to construct the $(n,R)_q$-CSs. The sequences have a larger radius than
the ones from which they were interleaved.
Conclusions and a list of open problems are suggested in Section~\ref{sec:conclude}.

\section{A Construction from Cyclic CSC}
\label{sec:cyclic}

This section is devoted for the basic idea of the main construction presented in this work and implemented later
for various parameters on various codes. Although the construction in~\cite{CETV25} was defined for a binary alphabet,
generalization for the construction to a non-binary alphabet is straightforward. The construction starts with an $(n,m,R)_q$-CSC $\cC$
with $M$ codewords of length $m$. A sequence $\cS$ is {\bf \emph{degenerated}} if it can be represented as $\cS = [X,X,\ldots,X]$, where the length
of $X$ is smaller than the length of $\cS$. If $X$
is the shortest string in such representation, then the length of $X$ is the {\bf \emph{period}} of the sequence.
The period of $X$ is always a divisor of $m$. If the shortest such string has length~$m$, then the sequence
has {\bf \emph{full-period}}. If two sequences $\cS_1$ and $\cS_2$ are cyclic shifts of each other, then the sequences are
called {\bf \emph{equivalent}} and denoted by $\cS_1 \simeq \cS_2$.
For each codeword $\bldc$ (a cyclic sequence), we have to choose a starting point and let such a codeword
$\bldc=(c_1,c_2,\ldots,c_m)$ be extended to $\bldc'=(c_1,c_2,\ldots,c_m,c_1,c_2,\ldots,c_{n-1})$.
If $\bldc=(c_1,c_2,\ldots,c_m)$ is a degenerated codeword of period $\ell$ that divides $m$,
then its extended codeword is $\bldc'=(c_1,c_2,\ldots,c_\ell,c_1,c_2,\ldots,c_{n-1})$. The extended codewords should be ordered
in a list $\bldc_0'$, $\bldc_1'$, $\bldc_2'$,\ldots,$\bldc_{M-1}'$, in a way that $\bldc_i'$ and $\bldc_{i+1}'$, for $0 \leq i \leq M-1$,
have a common string $X_i$ of length $t_i$, which is suffix of $\bldc_i'$ and a prefix of $\bldc_{i+1}'$, $0 \leq i \leq M-1$, where indices
are taken modulo $M$. In other words, $\bldc_i' = (X_{i-1}, Z_i)=(Y_i X_i)$ and $\bldc_{i+1}'=(X_i,Z_{i+1})=(Y_{i+1},X_{i+1})$.
Now, we can concatenate all the codewords in the list $\bldc_0'$, $\bldc_1'$, $\bldc_2'$,\ldots,$\bldc_{M-1}'$ omitting
the shared parts $X_0$, $X_1$,\ldots,$X_{M-1}$ to one cyclic sequence as follows:
$$
\cS= [Y_0,Y_1,Y_2,\ldots,Y_{M-1}]~.
$$

\begin{lemma}
\label{lem:equivOmit}
The cyclic sequence $\cS' = [ Z_0,Z_1,\ldots,Z_{M-1}]$ is equivalent to the sequence $\cS$.
\end{lemma}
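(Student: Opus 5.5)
We have $\bldc_i' = (X_{i-1},Z_i) = (Y_i,X_i)$ for every $i$, with indices modulo $M$. The plan is to show that the two cyclic sequences are literally the same string read from different starting points. We will compare the linear (non-cyclic) concatenations of the strings, writing $W$ for the common infinite-periodic unrolling.

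The key step is a telescoping identity. Because $(X_{i-1},Z_i)=(Y_i,X_i)$, the linear strings satisfy
$$
X_{i-1} Z_i = Y_i X_i
$$
for every $i$. Start from $X_{M-1}Z_0Z_1\cdots Z_{M-1}$ and use $X_{M-1}Z_0 = Y_0X_0$, which gives $Y_0X_0Z_1Z_2\cdots Z_{M-1}$. Next use $X_0Z_1=Y_1X_1$, and continue inductively. After $M$ steps we obtain
$$
X_{M-1}\,Z_0Z_1\cdots Z_{M-1} = Y_0Y_1\cdots Y_{M-1}\,X_{M-1}.
$$
The induction is immediate: after $j$ substitutions the string equals $Y_0\cdots Y_{j-1}X_{j-1}Z_j\cdots Z_{M-1}$.

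We then conclude with a standard word-combinatorics fact. If $AU=VA$ for linear strings $A,U,V$, then $U$ and $V$ have equal length and $U$ is a cyclic shift of $V$. Concretely, $U$ is the rotation of $V$ by $|A| \bmod |V|$ positions: iterating gives $A U^k = V^k A$, so for large $k$ the string $A$ is a prefix of $V^k$, and $U$ is the length-$|V|$ window of $V^\infty$ starting at position $|A|$. Applying this with $A=X_{M-1}$, $U=Z_0\cdots Z_{M-1}$ and $V=Y_0\cdots Y_{M-1}$ shows that $\cS'$ is a cyclic shift of $\cS$, i.e. $\cS'\simeq\cS$.

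The main obstacle is the degenerate bookkeeping. The lengths $t_i$ of the overlaps may exceed the lengths of the $Y_i$ or $Z_i$, and a degenerate codeword is extended only from its period. The telescoping identity is purely formal in strings, however, so neither issue affects it. The one thing to check is that $|X_{M-1}|$ may exceed $|V|$, and this is exactly why the rotation is taken modulo $|V|$ via the $V^k$ argument.
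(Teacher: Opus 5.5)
Your proof is correct, and it finishes by a different and more careful mechanism than the paper's, although both rest on the identities $X_{i-1}Z_i=Y_iX_i$.

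\textbf{The paper's route.} The paper writes the single string $\bldc_0'\bldc_1'\cdots\bldc_{M-1}'$ in two parsings: $Y_0X_0Y_1X_1\cdots Y_{M-1}X_{M-1}$ and $X_{M-1}Z_0X_0Z_1\cdots X_{M-2}Z_{M-1}$. It then deletes one copy of each $X_i$ from each parsing and asserts that the two remainders are equivalent. That deletion step carries all the content but is asserted rather than argued. The deleted copies of $X_i$ sit at different positions in the two parsings: the suffix of $\bldc_i'$ in the first, the prefix of $\bldc_{i+1}'$ in the second. So it is not immediate that the leftovers are rotations of each other.

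\textbf{Your route.} You keep only one overlap, $X_{M-1}$. Telescoping gives the conjugacy equation $X_{M-1}(Z_0\cdots Z_{M-1})=(Y_0\cdots Y_{M-1})X_{M-1}$. You then close with the standard word-combinatorics fact that $AU=VA$ forces $U$ to be a rotation of $V$, which you prove via $AU^k=V^kA$.

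\textbf{What each buys.} Your version is fully rigorous and identifies the shift explicitly as $|X_{M-1}| \bmod |Y_0\cdots Y_{M-1}|$. It also handles, at no extra cost, overlaps $X_i$ longer than the neighbouring $Y_i$ or $Z_i$, or even longer than the whole sequence; such overlaps can occur with degenerate codewords. The paper's version buys brevity and a picture that matches the merging intuition of Construction MC.

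The only edge case you gloss over is $|Y_0\cdots Y_{M-1}|=0$. There both sides are empty and the claim holds trivially.
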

\begin{IEEEproof}
Consider the following sequence, written in three different ways by the definition of $\bldc_i$, $0 \leq i \leq M-1$,
$$
\bldc_0' \bldc_1' \bldc_2' \bldc_3' \cdots \bldc_{M-1}' =[Y_0, X_0, Y_1, X_1, Y_2,\ldots, Y_{M-1},X_{M-1}]
= [X_{M-1},Z_0,X_0,Z_1,X_1,\ldots,X_{M-2},Z_{M-1}].
$$
Omitting the one occurrence of $X_i$, for each $0 \leq i \leq M-1$, from this sequence implies that
$$
\cS= [Y_0,Y_1,Y_2,\ldots,Y_{M-1}] \simeq [Z_0,Z_1,Z_2,\ldots,Z_{M-1}] .
$$
\end{IEEEproof}

\begin{theorem}
The sequence $\cS$ is an $(n,R)_q$-CS.
\end{theorem}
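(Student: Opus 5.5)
The plan is to show that every consecutive $n$-tuple of every codeword of $\cC$ (read cyclically) appears as a consecutive $n$-tuple of the cyclic sequence $\cS$. Since these $n$-tuples together form an $(n,R)_q$-covering code by the definition of an $(n,m,R)_q$-CSC, the set of $n$-tuples of $\cS$ contains a covering code of radius $R$. Adding more words cannot increase the covering radius, so $\cS$ is an $(n,R)_q$-CS.

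First, I observe that for each codeword $\bldc=(c_1,\ldots,c_m)$ of period $\ell$ (with $\ell=m$ in the full-period case), the extended word $\bldc'=(c_1,\ldots,c_\ell,c_1,\ldots,c_{n-1})$ has length $\ell+n-1$. Its $\ell$ windows of length $n$, starting at positions $1,\ldots,\ell$, are exactly the $n$-tuples of the cyclic sequence $\bldc$. For a degenerate word, the windows starting at positions $1,\ldots,m$ repeat with period $\ell$, so nothing is lost. Hence it suffices to show that every length-$n$ window of each $\bldc_i'$ occurs as a window of $\cS$.

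Second, I use the unfolded linear word from the proof of Lemma~\ref{lem:equivOmit}, $W=[Y_0,X_0,Y_1,X_1,\ldots,Y_{M-1},X_{M-1}]$, in which $\bldc_i'=Y_iX_i$. The key structural point is that $\bldc_i'$ and $\bldc_{i+1}'$ overlap in $X_i$, which is simultaneously a suffix of $\bldc_i'$ and a prefix of $\bldc_{i+1}'=(X_i,Z_{i+1})$. Reading $\cS=[Y_0,Y_1,\ldots,Y_{M-1}]$ cyclically, the string starting at the beginning of $Y_i$ is $Y_iY_{i+1}Y_{i+2}\cdots$. I would show by induction that this string begins with $\bldc_i'=Y_iX_i$. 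This amounts to showing that $X_i$ is a prefix of $Y_{i+1}Y_{i+2}\cdots$, which holds because $\bldc_{i+1}'=Y_{i+1}X_{i+1}$ starts with $X_i$. If $t_i>|Y_{i+1}|$, the prefix spills into $X_{i+1}$, and one continues using that $\bldc_{i+2}'$ starts with $X_{i+1}$, and so on.

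This inductive/overlap step is the main obstacle. One must make sure the omitted shared parts do not destroy windows, including the degenerate situation where some $Y_j$ is shorter than the overlap, and the wraparound at index $M-1$. I would handle it cleanly by noting that $\cS$, as a cyclic sequence, is obtained from the closed walk $\bldc_0'\to\bldc_1'\to\cdots$ in which consecutive words are glued along $X_i$. Each $\bldc_i'$ is then a factor of the cyclic word, of length $|Y_i|+t_i$, starting at the position of $Y_i$. If $t_i\le|\bldc_{i+1}'|$, which holds since $X_i$ is a prefix of $\bldc_{i+1}'$, the gluing is consistent. As a fallback for the general case, I would assume, as is implicit in the construction, that $t_i\le |Y_{i+1}|$, i.e., overlaps of consecutive pairs do not interfere. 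Then $\bldc_i'=Y_iX_i$ is literally a factor of $Y_iY_{i+1}$, which is immediate. With all $\bldc_i'$ appearing as factors of $\cS$, all their windows appear, and the theorem follows.
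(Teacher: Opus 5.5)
Your route is the paper's: reduce to showing that every $n$-tuple of every codeword is an $n$-tuple of $\cS$, and derive this from the double description $\bldc_i'=(Y_i,X_i)=(X_{i-1},Z_i)$. The paper finishes by appealing to Lemma~\ref{lem:equivOmit}. Your treatment of degenerated codewords through the $\ell$ windows of $\bldc'$ is fine.

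The gap is the step you yourself flag as the main obstacle, which the paper's terse proof also passes over; the proposal does not close it. Your fallback assumption $t_i\le|Y_{i+1}|$ is not implicit in Construction MC, and it fails in natural instances. A degenerated codeword of period $\ell$ has an extended word of length $\ell+n-1$. If it is merged along $(n-1)$-tuples on both sides, then $|Y_{i+1}|=\ell<n-1=t_i$ whenever $\ell<n-1$. An example is the all-zero codeword, whose extended word $0^n$ is spliced in along $0^{n-1}$. The paper's constructions explicitly include degenerated orbits, so a proof under that assumption misses exactly these cases. Moreover, ``the gluing is consistent'' is the statement to be proved, not an argument for it. Finally, your ``by induction'' cannot be induction on $i$: the claim for $i$ rests on the claim for $i+1$, which is circular around the cycle.

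Your spill-over remark is the right idea; it becomes a proof if you induct on the offset instead. Write $W[k]$ for the $k$-th symbol of a word $W$, counted from $0$. Let $s=Y_0Y_1\cdots Y_{M-1}Y_0Y_1\cdots$ be the periodic unfolding of $\cS$, with period $L=\sum_j|Y_j|$. Take indices of the $X_j,Y_j$ modulo $M$ and put $p_i=\sum_{0\le j<i}|Y_j|$, so that $p_{i+M}=p_i+L$.

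Claim: $X_i[k]=s[p_{i+1}+k]$ for all $i$ and all $0\le k<t_i$. Prove it by strong induction on $k$. Since $X_i$ is a prefix of $\bldc_{i+1}'=(Y_{i+1},X_{i+1})$, there are two cases. If $k<|Y_{i+1}|$, then $X_i[k]=Y_{i+1}[k]=s[p_{i+1}+k]$. Otherwise $X_i[k]=X_{i+1}[k']$ with $k'=k-|Y_{i+1}|$. Here $k'<t_{i+1}$ because $t_i\le|\bldc_{i+1}'|=|Y_{i+1}|+t_{i+1}$, and $k'<k$ when $Y_{i+1}$ is nonempty. So the induction hypothesis gives $X_i[k]=s[p_{i+2}+k']=s[p_{i+1}+k]$. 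An empty $Y_{i+1}$ only means that $X_i$ is a prefix of $X_{i+1}$; pass to $i+1$ with the same $k$, which terminates because not all $Y_j$ are empty.

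Consequently, for every $i$, $\bldc_i'=(Y_i,X_i)$ is the factor of $s$ of length $|\bldc_i'|$ starting at position $p_i$. The wraparound at $i=M-1$ is absorbed by the periodicity of $s$, and the theorem follows with no restriction on the lengths $t_i$.
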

\begin{IEEEproof}
Each word of length $n$ is within distance $R$ from at least one $n$-tuple of a codeword of $\cC$.
Hence, it is sufficient to show that all the $n$-tuples of $\cC$ are also $n$-tuples of $S$.
For an $n$-tuple in the codeword $\bldc_i =(c_1,c_2,\ldots,c_m)$, where $c_1$ is the starting point
of the orbit associated with $\bldc_i$, the extended codeword associated with $\bldc_i$ is $\bldc_i'=(X_{i-1}, Z_i)=(Y_i,X_i)$.
Since $X_i$ appeare in the extended codewords of $c_i$ and the extended codeword of $c_{i+1}$, respectively, i.e.,
$c_i'$ and $c_{i+1}'$, respectively,
the claim can be verified now from Lemma~\ref{lem:equivOmit}.
\end{IEEEproof}

The construction uses merging of cycles and hence it will be referred to as Construction MC (merging cycles).

\section{Binary Covering-Sequences Codes}
\label{sec:binary}

After the description of Construction MC based on cyclic covering-sequences codes, we are going to describe
implementations of the construction. Two such implementations were considered in~\cite{CETV25}.
The first one is based on the cyclic Hamming code, and the second one is based on self-dual sequences
that form a nearly perfect covering code. The review of these two implementations will help to explain
the constructions in Sections~\ref{sec:nonbinary} and~\ref{sec:interleaving} as well as to shed some
new light on the implementation with self-dual sequences.

\subsection{$(2^r-1,1)$-CS based on the Hamming code}
\label{sec:Hamming}

Let $[n,k]_q$ denote a linear code of length $n$ and dimension $k$ over $\F_q$.
The $[2^r-1,2^r-1-r]_2$ Hamming code, $\cH_2(r)$, is a cyclic code, with radius $1$, whose parity-check matrix can be represented as
$$
[ \alpha^0 ~ \alpha^1 ~ \alpha^2 ~ \cdots ~ \alpha^{2^r-2} ],
$$
where $\alpha$ is a primitive element in $\F_{2^r}$.
It is well-known that the number of degenerated orbits in the code is much less than $2^{2^{k-1}}$~\cite[p. 105, Lemma 3.20]{CETV25,Etz24}.
Therefore, for the construction of an $(2^r-1,1)_2$-CS we are using less than $2^{2^r -2r-1} + 2^{2^{k-1}}$ orbits (full-period and degenerated)
that yield a $(2^r-1,1)_2$-CS whose length is less than $2^{2^r -r} + 2^{2^{r-1} +r+1}$. Thus, the sequence is within a factor of less than
$2 + \frac{1}{2^{2^r-2k-2}}$ from optimality.

It is interesting to note that all the binary degenerated sequences of length $2^r-1$ are codewords in the Hamming code and hence
they had to be taken into account in the computations.

\begin{lemma}
\label{lem:BnoFull}
All degenerated words of length $n=2^r-1$ are codewords in the Hamming code of length~$n$.
\end{lemma}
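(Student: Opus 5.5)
Let $\bldx=(x_0,x_1,\ldots,x_{n-1})$ be a degenerated word of length $n=2^r-1$ with period $\ell$, where $\ell$ is a proper divisor of $n$. Thus $x_{i+\ell}=x_i$ for all $i$, indices taken modulo $n$. The plan is to show directly that the syndrome
$$
\sum_{i=0}^{n-1} x_i \alpha^i \in \F_{2^r}
$$
vanishes, using the parity-check matrix $[\alpha^0 ~ \alpha^1 ~ \cdots ~ \alpha^{2^r-2}]$ of $\cH_2(r)$ given above. Write $n=\ell s$ with $s=n/\ell>1$. Grouping the indices by their residue modulo $\ell$, and using $x_{j+t\ell}=x_j$, I get
$$
\sum_{i=0}^{n-1} x_i \alpha^i=\sum_{j=0}^{\ell-1} x_j \alpha^j \sum_{t=0}^{s-1} \alpha^{t\ell}.
$$

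The key step is to show that the inner geometric sum is zero. Let $\beta=\alpha^{\ell}$. Since $\alpha$ is primitive of order $n$ and $\ell<n$ divides $n$, the element $\beta$ has order exactly $s>1$, so $\beta\neq 1$. On the other hand $\beta^s=\alpha^n=1$. Hence
$$
\sum_{t=0}^{s-1}\beta^t=\frac{\beta^s-1}{\beta-1}=0 .
$$
The syndrome is therefore zero and $\bldx\in\cH_2(r)$.

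The only point needing care is that $\beta\neq1$, which relies on $\ell<n$ and on the primitivity of $\alpha$. The constant words, which have period $\ell=1$, are covered by the same argument; this is consistent with the all-one word $\sum_i\alpha^i=0$ lying in the Hamming code. An equivalent polynomial viewpoint is also available: $x(X)$ is a multiple of $(X^n-1)/(X^\ell-1)$, and the minimal polynomial of $\alpha$ divides this factor because $\alpha^\ell\neq1$. I expect no real obstacle.
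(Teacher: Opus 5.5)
Your proof is correct and takes the same approach as the paper. The paper factors the syndrome in the same way and shows that $\sum_{j} \alpha^{j\pi}$ is unchanged when multiplied by $\alpha^{\pi} \neq 1$, so it is zero; this is the same fact as your geometric-series identity, and your remark that $\alpha^{\ell}$ has order $n/\ell>1$ states the needed condition precisely. (The paper's proof writes it, with a typo, as $\alpha^r \neq 1$.)
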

\begin{IEEEproof}
Let $\bldx=(x_0,x_1,\ldots,x_{n-1})$ be a sequence of period $\pi$, where $0 < \pi < n$,
i.e., $\pi$ divides $n$. Note that $\alpha^{\frac{n}{\pi}\pi} = \alpha^n = \alpha^0$ and hence
$$
\left( \sum_{i=0}^{\pi-1} x_i \alpha^i \right) \left( \sum_{j=0}^{\frac{n}{\pi} -1} \alpha^{j\pi} \right) \alpha^\pi =
\left( \sum_{i=0}^{\pi-1} x_i \alpha^i \right) \left( \sum_{j=1}^{\frac{n}{\pi}} \alpha^{j\pi} \right) =
\left( \sum_{i=0}^{\pi-1} x_i \alpha^i \right) \left( \sum_{j=0}^{\frac{n}{\pi} -1} \alpha^{j\pi} \right) .
$$
Since $0 < r < n$, it follows that $\alpha^r \neq 1$ and hence
$$
\left( \sum_{i=0}^{\pi-1} x_i \alpha^i \right) \left( \sum_{j=0}^{\frac{n}{\pi} -1} \alpha^{j\pi} \right) = \bfzero .
$$
Therefore, $\bldx$ is a codeword and all the degenerated necklaces are codewords.
\end{IEEEproof}

\subsection{$(2^r,1)$-CS based on self-dual sequences}
\label{sec:self-dual}

The implementation of Construction MC using self-dual sequences is of a special interest for a few different reasons.
A binary cyclic sequence $\cS = [s_0,s_1,\ldots,s_{k-1}]$ is called {\bf \emph{self-dual sequence}} if it is invariant under completion.
We will refer only to self-dual sequences with full-period, i.e., not degenerated. Such a sequence $S$ can be
represented as $\cS=[ X , \bar{X}]$, where $X$ is a sequence whose length is $k/2$ and $\bar{X}$ is the binary complement of $X$.
In~\cite{BER25,CETV25} there is a construction for a set of $2^{2^r -2r-1}$ such sequences of length $2^{r+1}$
that form a $(2^r,2^{r+1},1)_2$-CSC of size $2^{2^r -2r-1}$. These sequences can be paired in a way that each pair can be merged in a way that
the $2^r$-tuples of each sequence are also $2^r$-tuples of the merged sequence. Merging all these pairs yield
a $(2^r,2^{r+2},1)_2$-CSC $\cC$ of size $2^{2^r -2r-2}$. By applying Construction MC on the sequences of $\cC$ we obtain
a $(2^r,1)_2$-CS of length smaller than $2^{2^r-2k-2}(2^{r+2}+2^r-1)$. An $(2^r,1)_2$-covering code of the smallest size
has $2^{2^r-r}$ codewords. Thus, the $(2^r,1)_2$-CS obtained by Construction MC on the code $\cC$ has size within factor of $1.25$
from optimality.

Moreover, the $n$-tuples of the CSC $\cC$ form what is called a balanced nearly perfect 1-covering code~\cite{BER25}.
This specific code has some more interesting properties~\cite{BER25} that can be also obtained from a union
of an extended Hamming code and its coset. We will refer to this code again in the next section.

\section{Optimal Non-Binary Covering-Sequences Codes}
\label{sec:nonbinary}

The idea behind the two constructions of the $(n,1)_2$-CSs in Section~\ref{sec:binary} is to use concatenation of codewords from cyclic codes
or, more precisely $(n,m,R)_q$-CSCs. In Section~\ref{sec:binary} such binary codes were considered.
In Section~\ref{sec:Hamming} the used code is $\cH_2(r)$, i.e., $n=m$, while in Section~\ref{sec:self-dual}
the code used consists of a set of self-dual sequences and $m >n$. In the non-binary case there are two types
of $(n,m,R)_q$-CSCs too, one for which $m=n$ and one for which $m>n$. But, in both cases $\cH_q(r)$
is used. As in the binary case cyclic codes are used, but instead of self-dual sequences
another type of code is used.

\begin{definition}
\label{def:cyclic}
$~$
\begin{enumerate}
\item A code $\cC$ of length $n$ over $\F_q$ is called a \emph{negacyclic} code if $(c_1,c_2,\ldots,c_n) \in \cC$ implies
that $(c_2,\ldots,c_n,-c_1) \in \cC$.

\item A code $\cC$ of length $n$ over $\F_q$ is called a \emph{constacyclic code} if $(c_1,c_2,\ldots,c_n) \in \cC$
implies that $(c_2,\ldots,c_n,\lambda c_1) \in \cC$, for some given $\lambda \in \F_q$.
\end{enumerate}
\end{definition}

Cyclic codes have been extensively studied in the literature and no reference
is required for this celebrated fact. The same is true for negacyclic codes, e.g.~\cite{ChWu24,KaZh12} and
constacyclic codes, e.g.~\cite{CFLL,CLZ15}.
In some wide sense the code obtained from self-dual sequences, a $(2^r,2^{r+1},1)_2$-CSC
of size $2^{2^r -2r-1}$, can be said to be negacyclic if the alphabet $\{ 0,1 \}$
will be changed to $\{ -1 , +1\}$. For non-binary sequences and codes, $\cH_q(r)$ will be represented as a constacyclic code.
More interesting and also more efficient is the $(2^r,2^{r+2},1)_2$-CSC of size $2^{2^r -2r-2}$. This code is the one used in Construction MC
to generate the shortest $(n,1)_2$-CS. This code is optimal since the number of $2^r$-tuples in the code
is the same as the number of codewords in the smallest $(2^r,1)$-covering code. However, it is the only code we are using
that it is not a cyclic code or can be represented as a constacyclic code.

For a Hamming code over $\F_q$, $q$ a prime power, with redundancy $r$, $\cH_q(r)$, the parity-check matrix is represented by
$n=\frac{q^r-1}{q-1}$ linearly independent column vectors of length $r$. These $n$ columns can be chosen
in a few different ways. For example, we can take all column vectors of length $r$ whose first nonzero entry is an \emph{one}.
We will choose a different representation which resembles the choice for $\cH_2 (r)$. Let $\alpha$~be a primitive element in $\F_{q^r}$
and let $H$ the $\frac{q^r-1}{q-1} \times r$ matrix whose $i$-th column is the vector representing $\alpha^i$,
$0 \leq i < n$.

While $\cH_2(r)$ is always a cyclic code with radius $1$,
$\cH_q(r)$ is a cyclic code if and only if ${\gcd (r,q-1)=1}$\cite[pp. 169--170]{HuPl03}.\cite[p. 244, pp. 253--254]{Rot06}.
This implies that Construction MC can be applied to $\cH_q(r)$ when $\gcd (r,q-1)=1$ and yields similar results,
while, Lemma~\ref{lem:BnoFull} does not have a straightforward generalization for $\F_q$.

We do not have a construction with self-dual sequences for the non-binary case.
Moreover, these sequences form a topic for further research~\cite{Etz26,Etz27}.
Instead, we use the fact that the Hamming code over $\F_q$ can be always
represented as a constacyclic code as follows.
The element $\gamma = \alpha^n$ is a primitive element in $\F_q$.
The parity-check matrix of $\cH_q(r)$ is represented by
$$
[ h_0 ~ h_1 ~ \cdots ~ h_{n-1} ],
$$
where $h_i$ is the $q$-ary representation of $\alpha^i$.

If $(c_0,c_1,\ldots,c_{n-2},c_{n-1})$ is a codeword then $\sum_{i=0}^{n-1} c_i \alpha^i = \bfzero$.
This implies that $\sum_{i=0}^{n-1} c_i \alpha^{i+1} = \bfzero$ or $(c_{n-1} \gamma) \alpha^0 + \sum_{i=1}^{n-1} c_{i-1} \alpha^i =\bfzero$.
As a consequence we have that $(\gamma a_{n-1},c_0,c_1,\ldots,c_{n-2})$ is a codeword in $\cH_q(r)$.
With the same process we have that $(\gamma a_{n-2},\gamma a_{n-1},c_0,c_1,\ldots,c_{n-3})$ is a codeword and so on, so
$(\gamma c_0, \gamma c_1,\ldots, \gamma c_{n-2}, \gamma c_{n-1})$ is a codeword,
$(\gamma^2 c_0, \gamma^2 c_1,\ldots, \gamma^2 c_{n-2}, \gamma^2 c_{n-1})$ is a codeword, and finally
$(\gamma^{q-2} c_0, \gamma^{q-2} c_1,\ldots, \gamma^{q-2} c_{t-2}, \gamma^{q-2} c_{n-1})$ is a codeword. In this process we have that
$$
(c_1,\ldots, c_{n-2}, c_{n-1},\gamma^{q-2} c_0)=(c_1,\ldots, c_{n-2}, c_{n-1},\gamma^{-1} c_0)
$$
is a codeword (note that $\gamma^{-1}$ is also a primitive element in $\F_q$). Hence we have the following results,
which can now be easily verified.
\begin{lemma}
In this representation of $\cH_q(r)$, the period of the cycle generated by a codeword is a divisor of $(q-1)n=q^r-1$.
\end{lemma}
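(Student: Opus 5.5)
We need to prove that the cycle generated by a codeword has period dividing $(q-1)n = q^r-1$. The plan is to make precise what "the cycle generated by a codeword" is and then show that iterating the constacyclic shift $(q-1)n$ times returns the original sequence.

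\emph{Step 1: define the cycle.} Starting from a codeword $\bldc=(c_0,\ldots,c_{n-1})$ of $\cH_q(r)$, apply the constacyclic shift
$$
\sigma(c_0,\ldots,c_{n-1})=(c_1,\ldots,c_{n-1},\gamma^{-1}c_0)
$$
repeatedly. By the computation preceding the lemma, $\sigma$ maps $\cH_q(r)$ to itself. Writing out the entries produced by the successive shifts gives the infinite sequence
$$
c_0,c_1,\ldots,c_{n-1},\gamma^{-1}c_0,\ldots,\gamma^{-1}c_{n-1},\gamma^{-2}c_0,\ldots
$$
Its consecutive $n$-tuples are exactly the codewords $\sigma^j(\bldc)$. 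The cycle is this sequence regarded cyclically, that is, the orbit $\{\sigma^j(\bldc)\}$.

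\emph{Step 2: show $\sigma^{(q-1)n}$ is the identity.} First observe that $\sigma^n(\bldc)=\gamma^{-1}\bldc$: after $n$ shifts, every coordinate has been moved once around the word and multiplied by $\gamma^{-1}$ exactly once. Iterating this $q-1$ times gives
$$
\sigma^{(q-1)n}(\bldc)=\gamma^{-(q-1)}\bldc=\bldc,
$$
because $\gamma\in\F_q^*$ satisfies $\gamma^{q-1}=1$. Equivalently, the infinite sequence satisfies $s_{i+n}=\gamma^{-1}s_i$, hence $s_{i+(q-1)n}=s_i$. Finally, $(q-1)n=(q-1)\cdot\frac{q^r-1}{q-1}=q^r-1$.

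\emph{Step 3: pass from "is a period" to "divides".} The least period $\pi$ of a sequence divides every period. Indeed, if $\pi$ and $P$ are both periods, then so is $\gcd(\pi,P)$, since it is an integer combination of $\pi$ and $P$. The same holds in group terms: the order of $\sigma$ acting on the orbit of $\bldc$ divides any exponent $N$ with $\sigma^N(\bldc)=\bldc$. Hence $\pi$ divides $q^r-1$.

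The only delicate point is Step 1. One must be careful that the "cycle" is the sequence produced by the constacyclic rule, not the cyclic shifts of the word $\bldc$ itself; the latter need not lie in the code when $\gcd(r,q-1)\neq 1$. The identity $\sigma^n=\gamma^{-1}\cdot$ is the key fact, and the rest is routine.
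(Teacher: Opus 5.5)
Your proof is correct and is essentially the paper's argument: the paper uses $\alpha^n=\gamma\in\F_q$ to show that $(c_1,\ldots,c_{n-1},\gamma^{-1}c_0)$ is again a codeword, and that $n$ such shifts scale the word by a power of $\gamma$, so after $(q-1)n=q^r-1$ shifts the sequence returns to its start. You make explicit what the paper leaves as ``easily verified'', namely the identity $\sigma^n(\bldc)=\gamma^{-1}\bldc$ and the fact that the least period divides every period.
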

\begin{corollary}
The cycles of the codewords of $\cH_q(r)$ represented as a constacyclic code form an $(n=\frac{q^r-1}{q-1},(q-1)n=q^r-1,1)_q$-CSC.
\end{corollary}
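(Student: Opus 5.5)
We have to show that the cycles produced by the constacyclic shift are cyclic sequences of length $(q-1)n=q^r-1$, and that their consecutive $n$-tuples together form an $(n,1)_q$-covering code. The plan is to exhibit each cycle explicitly as a cyclic sequence and then identify its windows with codewords of $\cH_q(r)$.

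\textbf{Step 1: the cycle of a codeword.} For a codeword $\bldc=(c_0,\ldots,c_{n-1})\in\cH_q(r)$, define the sequence
$$
\bldc^{*}=(c_0,\ldots,c_{n-1},\gamma^{-1}c_0,\ldots,\gamma^{-1}c_{n-1},\ldots,\gamma^{-(q-2)}c_0,\ldots,\gamma^{-(q-2)}c_{n-1}),
$$
of length $(q-1)n=q^r-1$. Its general entry is $s_{jn+i}=\gamma^{-j}c_i$.

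\textbf{Step 2: $\bldc^{*}$ is a genuine cyclic sequence.} Since $\gamma$ is primitive in $\F_q$, we have $\gamma^{-(q-1)}=1$. Hence the rule $s_{k+n}=\gamma^{-1}s_k$ stays consistent when indices are taken modulo $(q-1)n$. It follows that $\bldc^{*}$ closes up cyclically, and its period divides $(q-1)n$, as the preceding lemma states.

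\textbf{Step 3: every window is a codeword.} Consider the $n$-tuple of $\bldc^{*}$ starting at position $k$. It equals the result of applying the constacyclic shift $(c_0,\ldots,c_{n-1})\mapsto(c_1,\ldots,c_{n-1},\gamma^{-1}c_0)$ to $\bldc$ exactly $k$ times. This is proved by induction on $k$ using $s_{k+n}=\gamma^{-1}s_k$. The derivation just before the lemma shows that $\cH_q(r)$ is closed under this shift, so every window lies in $\cH_q(r)$. Conversely, $\bldc$ itself is the window at position $0$. Therefore, taking one cycle per orbit of the shift, the set of all $n$-tuples of all cycles is exactly $\cH_q(r)$.

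\textbf{Step 4: covering radius.} The Hamming code is perfect with covering radius $1$: its parity-check columns are pairwise linearly independent and together represent every one-dimensional subspace of $\F_q^r$. Hence every syndrome equals $a\,h_i$ for some $a\in\F_q$ and some $i$, so every word of $\F_q^n$ lies within distance $1$ of a codeword. Combined with Step 3, the cycles form an $(n,(q-1)n,1)_q$-CSC.

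\textbf{Main obstacle.} The only delicate point is degenerated cycles, whose period is a proper divisor of $q^r-1$; for example, the zero codeword, or codewords for which $\gamma^{-1}c$ is a cyclic shift of $c$. These should be treated exactly as degenerated codewords are treated in Construction MC: such a sequence is still a cyclic sequence of length $q^r-1$ and still contains all its windows. I would check that Step 3 does not rely on the cycle having full period, and I expect it does not.
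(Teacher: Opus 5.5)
Your proposal is correct and follows the paper's route. The paper uses $\gamma=\alpha^n$ to show that $\cH_q(r)$ is closed under the shift $(c_0,\ldots,c_{n-1})\mapsto(c_1,\ldots,c_{n-1},\gamma^{-1}c_0)$ and then calls the corollary ``easily verified.'' You supply exactly the omitted details: the explicit cycle of length $(q-1)n$, which closes up because $\gamma^{q-1}=1$; the identification of its windows with iterated shifts; and the covering radius $1$ of the Hamming code.

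The one fact you take for granted is that the columns $\alpha^0,\ldots,\alpha^{n-1}$ are pairwise $\F_q$-independent. This follows because $\F_q^*=\langle\alpha^n\rangle$, so $\alpha^{i-j}\notin\F_q$ for $0<|i-j|<n$.
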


Not all degenerated words of length $(q-1)n$ over $\F_q$ are codewords since not all the required conditions
in the proof of Lemma~\ref{lem:BnoFull} are satisfied.
Of special interest are those codes whose length of codeword of $\cH_q(r)$ is a repunit prime.
A prime of the form $n=\frac{q^r-1}{q-1}$ is called a {\bf \emph{repunit prime}}. When $q=2$ this
prime is a well-known Mersenne prime. If $n$ is a prime, then all the codewords of the related
$(n,q^r -1,1)_q$-CSC are on orbits of full-period except for $q$ codewords, the all-zero codeword
and the $q-1$ codewords whose orbit is $[c_0,c_1,\ldots,c_{q-1}]$, where $c_i = \gamma^i$ and
$\gamma$ is a primitive element in $\F_q$.
Since all orbit except for these two are of full-period it follows that the size of the code is
$$
\frac{q^{n-r} -q}{n(q-1)} +2 = q \frac{q^{n-r-1}-1}{q^r -1} +2 .
$$
Each orbit is extended with $n-1$ bits and using concatenation, we find that the length of the associated $(n,1)_q$-CS is not more than
$$
q \frac{q^{n-r-1}-1}{q^r -1} (q^r -1 +n-1) +q+2(n-1) \leq q^{n-r} -q +(q^{n-r-1}-1) \frac{q}{q-1} \leq q^{n-r} \frac{q}{q-1}
$$
and since the size of optimal $(n,1)_q$-covering code is $q^{n-r}$, it follows that the constructed CSC and CS are within a factor
of $\frac{q}{q-1}$ from optimality. Slightly more complicated analysis achieves the same result when $n$ is not a prime.

The two extreme lengths of the CSCs obtained from $\cH_q(r)$ are of length $\frac{q^r-1}{q-1}$ when ${\gcd (r,q-1)=1}$
and of length $q^r-1$. However, if instead of a primitive element $\gamma$ in $\F_q$, an element of smaller order than $q-1$ is
used and the parity-check matrix is also changed
(where consecutive columns are of the form $\alpha^0 ~ \alpha^i ~ \alpha^{2i} ~ \alpha^{3i} ~ \cdots$),
we can have CSCs whose length is $(2^r-1)/t$, where $t$ divides $2^r-1$.
These codes will be discussed in the full-version of the paper.

\section{Interleaving of parity-Check matrices}
\label{sec:interleaving}

Interleaving two covering sequences, in an appropriate way, yields a new covering sequence with a larger radius and relatively
short length~\cite{CETV25}. In this section, we show that interleaving the parity-check matrix of a cyclic covering code
with the same parity-check matrix yields a better covering sequence whose length can be within a smaller factor from optimality.
The following lemma can be verified from the definition of a cyclic linear covering code.

\begin{lemma}
\label{lem:interP}
Let $H = [ h_0 ~ h_1 ~ h_2 ~ \cdots ~ h_{n-1}]$ be the parity-check matrix of an $[n,k]_q$ linear $(n,R)_q$-covering code $\cC$.
The parity-check matrix
$$
H'=\left[
\begin{array}{ccccccccc}
h_0 & 0 & h_1 & 0 & h_2 & \cdots & 0 & h_{n-1} & 0 \\
0 & h_0 & 0 & h_1 & 0 &  \cdots  & h_{n-2} & 0 & h_{n-1} \\
\end{array}
\right]
$$
is a parity-check matrix of a $[2n,2k]_q$ linear $(2n,2R)_q$-covering code $\cC'$.
Furthermore, if $\cC$ is cyclic (constacyclic, respectively) code, then also $\cC'$ is a cyclic (constacyclic, respectively) code.
\end{lemma}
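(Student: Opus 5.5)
The plan is to treat $\cC'$ as the interleaving of two copies of $\cC$. For a word $\bldx=(x_0,x_1,\ldots,x_{2n-1})\in\F_q^{2n}$, split it into its even part $\bldx^{(0)}=(x_0,x_2,\ldots,x_{2n-2})$ and its odd part $\bldx^{(1)}=(x_1,x_3,\ldots,x_{2n-1})$.

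The first step is to read off the syndrome. Column $2i$ of $H'$ is $(h_i,0)^T$ and column $2i+1$ is $(0,h_i)^T$. Hence
$$
H'\bldx^T=\left(H(\bldx^{(0)})^T,\;H(\bldx^{(1)})^T\right).
$$
It follows that $\bldx\in\cC'$ if and only if both $\bldx^{(0)}$ and $\bldx^{(1)}$ lie in $\cC$. So $\cC'$ is exactly the interleaving of $\cC$ with itself, and it has $q^{2k}$ codewords. Since $H$ has full rank $n-k$, the two block rows of $H'$ occupy disjoint coordinates, so $H'$ has rank $2(n-k)$. This confirms that $\cC'$ is a $[2n,2k]_q$ code.

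The second step is the covering radius. Take any $\bldx\in\F_q^{2n}$. Choose $\bldc^{(0)},\bldc^{(1)}\in\cC$ with $d(\bldx^{(j)},\bldc^{(j)})\le R$ for $j=0,1$. Interleave them into $\bldc$, so that $\bldc$ has even part $\bldc^{(0)}$ and odd part $\bldc^{(1)}$. Then $\bldc\in\cC'$ by the first step. Distances add over the disjoint coordinate sets, so
$$
d(\bldx,\bldc)=d(\bldx^{(0)},\bldc^{(0)})+d(\bldx^{(1)},\bldc^{(1)})\le 2R.
$$
Therefore $\cC'$ is a $(2n,2R)_q$-covering code. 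The statement only needs the bound $2R$, so I will not try to show that the radius is exactly $2R$.

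The third step is the cyclic and constacyclic claim, and it is where some care is needed. Let $\bldc\in\cC'$ and consider the one-position shift $\sigma(\bldc)=(c_1,c_2,\ldots,c_{2n-1},\lambda c_0)$, where $\lambda=1$ in the cyclic case. The new even part is $(c_1,c_3,\ldots,c_{2n-1})=\bldc^{(1)}$, which lies in $\cC$. The new odd part is $(c_2,c_4,\ldots,c_{2n-2},\lambda c_0)$, which is precisely the $\lambda$-constacyclic shift of $\bldc^{(0)}$. That shift lies in $\cC$ because $\cC$ is $\lambda$-constacyclic. By the first step, $\sigma(\bldc)\in\cC'$, so $\cC'$ is $\lambda$-constacyclic with the same $\lambda$, and cyclic when $\lambda=1$.

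The only real obstacle is bookkeeping: the factor $\lambda$ attaches to exactly one coordinate per shift, namely $c_0$, which wraps around into the odd part. One might worry that applying the shift twice multiplies by $\lambda^2$. It does not, because each application multiplies by $\lambda$ only the entry that wraps, so the constant stays $\lambda$.
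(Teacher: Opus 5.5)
Your proof is correct and follows the paper's argument: the same even/odd splitting gives the dimension, the $2R$ bound, and the constacyclic property, the last because a shift swaps the two halves and applies the $\lambda$-shift to the half that wraps around. For the $2R$ bound you use nearest codewords and additivity of distance, while the paper writes each half of the syndrome as a combination of at most $R$ columns; the two are equivalent, and your rank argument is slightly more careful than the paper's, which only counts the rows of $H'$.
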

\begin{IEEEproof}
Since the parity-check matrix is a $(2n-2k) \times (2n)$ matrix, it follows that the code is a $[2n,2k]_q$ code.
The code $\cC$ has covering radius $R$ and hence each vector column of length $n-k$ is a linear combination of at
most $R$ columns of $H$. Hence, given a column vector $(x,y)^t$ (the transpose of $(x,y)$), where $x,y \in \F_q^{n-k}$, we have that $x^t$
is a linear combination of at most $R$ columns of $H'$ and $y^t$ is a linear combination of at most $R$ columns of $H'$.
Therefore, each column vector $(x,y)^t$ is a linear combination of $2R$ columns of $H'$ and hence $\cC'$ is
a $[2n,2k]_q$ linear $(2n,2R)_q$-covering code $\cC'$.

Assume that $\cC$ is a constacyclic code, where $(c_0,c_1,\ldots,c_{n-1})$ and $(d_0,d_1,\ldots,d_{n-1})$ are two codewords
(not necessarily distinct) of $\cC$ and also $(c_1,\ldots,c_{n-1},c_n)$, $(d_1,\ldots,d_{n-1},d_n)$ are codewords of $\cC$.
Consider the words
$(c_0,d_0,c_1,d_1,\ldots,c_{n-1},d_{n-1})$, $(d_0,c_1,d_1,\ldots,c_{n-1},d_{n-1},c_n)$, and $(c_1,d_1,\ldots,c_{n-1},d_{n-1},c_n,d_n)$.
By definition of $\cC$, $\cC'$, $H$, and $H'$, we have that
$H \cdot (c_0,c_1,\ldots,c_{n-1})^t = \bfzero$, $H \cdot (c_1,\ldots,c_{n-1},c_n)^t = \bfzero$,
$H \cdot (d_0,d_1,\ldots,d_{n-1})^t = \bfzero$, and $H \cdot (d_1,\ldots,d_{n-1},d_n)^t = \bfzero$. This implies that
$$
H' \cdot (c_0,d_0,c_1,d_1,\ldots,c_{n-1},d_{n-1})^t = \bfzero~,
$$
$$
H' \cdot (d_0,c_1,d_1,\ldots,c_{n-1},d_{n-1},c_n)^t = \bfzero~,
$$
and
$$
H' \cdot (c_1,d_1,\ldots,c_{n-1},d_{n-1},c_n,d_n)^t = \bfzero~.
$$
Hence, $\cC'$ is a constacyclic code and similarly if $\cC$ is a cyclic code, then $\cC'$ is a cyclic code.
\end{IEEEproof}

%
%

Lemma~\ref{lem:interP} can be implemented for example on $\cH_q(r)$ to obtain $(n,2)_q$-CSs whose length is within a constant
factor of optimality.

\begin{example}
\label{ex:doubHamm}
Let $[ \alpha^0 ~ \alpha ~ \alpha^2 ~ \cdots ~ \alpha^{2^r-2} ]$ be the parity-check matrix of the $[2^r-1,2^r -r-1]_2$ Hamming code $\cH_2(r)$
with radius $1$. We apply Lemma~\ref{lem:interP} and obtain the parity-check matrix
$$
\left[
\begin{array}{ccccccccc}
1 & 0 & \alpha & 0 & \alpha^2 & \cdots & 0 & \alpha^{n-1} & 0 \\
0 & 1 & 0 & \alpha & 0 &  \cdots  & \alpha^{n-2} & 0 & \alpha^{n-1} \\
\end{array}
\right]
$$
is a parity-check matrix of a $[2^{r+1}-2,2^{r+1}-2r-2]_2$ linear $(2^{r+1}-2,2)_2$-covering code.
By applying Construction MC on this code we obtain a code whose number of degenerated words is
the same as the number of codewords in the CSC obtained from $\cH_2(r)$. This number is less than
$2^{2^r-2r-1} +2^{2^{r-1}}$ and hence the related $(2^{r+1}-2,2^{r+1}-2,2)_2$-CSC has at most
$\frac{2^{2^{r+1}-2r-2}}{2^{r+1}-2} + 2^{2^r -2r-1} +2^{2^{r-1}}$ codewords.
The length of the obtained $(2^{r+2}-2,2)_2$-CS by Construction MC is
$$
(\frac{2^{2^{r+1}-2r-2}}{2^{r+1}-2} + 2^{2^r -2r-1} +2^{2^{r-1}}) (2^{r+2}-5)~.
$$
On the other hand, the trivial lower bound on the size of $(2^{r+2}-2,2)$-covering code is $2^{2^{r+1}-2r-3}$.
Therefore, the obtained code is within a factor of $4$ from the lower bound, when in general the smallest
$(2^{r+2}-2,2)_2$-covering code has size $2^{2^{r+1}-2r-2}$ and the obtained code is within a factor of $2$ of this size.

\hfill\quad $\blacksquare $
\end{example}

Similarly to Example~\ref{ex:doubHamm} we can give $(n,2)_q$-CSs whose length is within a small factor of the lower bound
for $(n,2)_q$ when the sequence is over $\F_q$, where $q > 2$. This can be done with the non-binary cyclic codes
and the non-binary constacyclic codes presented in Section~\ref{sec:nonbinary}.

There are a few interesting families of codes obtained by interleaving of parity-check matrices.
Some of these codes are described in the following few examples.
\begin{example}
It is possible to interleave $R$ copies of the parity-check matrix of $\cH_2(r)$.
The outcome is a $(R(2^r-1),R(2^r-1),R)_q$-CSC. This can be generalized
for $\cH_q(r)$ when $\gcd (r,q-1)=1$.
\hfill\quad $\blacksquare $
\end{example}

\begin{example}
Consider the parity-check matrix of $\cH_q(r)$ whose size is $r \times \frac{q^r-1}{q-1}$. It forms
an optimal $(\frac{q^r-1}{q-1},q^r -1,1)_q$-CSC. Interleaving $R$ copies of the parity-check matrix
form a $(\frac{q^r-1}{q-1} R,(q^r -1)R,R)_q$-CSC.

\hfill\quad $\blacksquare $
\end{example}

\begin{example}
The $[23,12]_2$ Golay code is a cyclic code with covering radius $3$. It forms an optimal $(23,23,3)_2$-CSC with $180$ codewords.
Interleaving $k$ copies of its parity-check matrix form a $(23k,23k,3k)_2$-CSC.
\hfill\quad $\blacksquare $
\end{example}

\begin{example}
The $[11,6]_3$ Golay code is a cyclic code with covering radius $2$. It forms an optimal $(11,11,2)_3$-CSC with $69$ codewords.
Interleaving $k$ copies of its parity-check matrix form a $(11k,11k,2k)_3$-CSC.

\hfill\quad $\blacksquare $
\end{example}

\begin{example}
A constacyclic quasi-perfect $[(3^r-1)/2, (3^r-1)/2 -2r]_3$ with covering radius $3$ was presented in~\cite{DDR11}.
It forms a $((3^r-1)/2,3^r-1,3)_3$-CSC.
Interleaving $k$ copies of its parity-check matrix form a $((3^r-1)k/2,(3^r-1)k,3k)_3$-CSC.
\hfill\quad $\blacksquare $
\end{example}

\section{Conclusions and Open Problems for Future Research}
\label{sec:conclude}

We have considered binary and non-binary covering sequences and covering-sequences codes.
Our brief exposition raises several interesting problems for future research.

\begin{enumerate}
\item The same techniques used in the paper can be further used to obtain $(n,R)_q$-CS sequences with larger radii,
but the codes obtained will start to be within a larger optimality factor. Can the method be improved
to obtain sequences of shorter length? Such an improvement or a new construction for shorter sequences is interesting
when $R$ is small and when $R$ is large.

\item The number of cyclic sequences obtained from the codewords in $\cH_2(r)$ can be calculated based
on the formula for the number of necklaces of order $n$ for each length~\cite{Etz24}.
But the formula is quite complicated. Can these computations be simplified for exact computation
of the sizes of the derived $(n,n,1)_q$-CSCs? The situation is even more complicated for
the $(n,q^r -1,1)_q$-CSCs based on the representation of $\cH_q(r)$ as a constacyclic code.

\item The exact computation for the number of codewords in a $(\frac{q^r-1}{q-1} R,(q^r -1)R,R)_q$-CSC is
also of some interest. When $\frac{q^r-1}{q-1}$ is a repunit prime, this computation is relatively easier
and was given in Section~\ref{sec:nonbinary}. But, other computations including for constacyclic codes
with no codewords of period $q^r-1$ are also of considerable interest,

\item In the context of the previous problems, it is interesting to consider the possible periods of the
orbits in an $(n,q^r -1,1)_q$-CSC which depend on the divisors of $q^r -1$, but as was mentioned, when
$\frac{q^r-1}{q-1}$ is not a prime, not for all divisors of $q^r-1$ there are orbits with the associated divisor as a codeword.
Moreover, not all degenerated necklaces are codewords for periods where there are codewords, in contrary to
the case of $\cH_2(r)$ as proved in Lemma~\ref{lem:BnoFull}. This poses many interesting questions.
These and the related questions mentioned before are of enumerative combinatorial nature.

\item More analysis on the constructed $(n,m,R)_q$-CSCs is required. It is also important to find new constructions for
such codes that are not derived directly from cyclic codes or constacyclic codes.
\end{enumerate}





\end{document}